\newtheorem{theorem}{Theorem}
\newtheorem{definition}{Definition}
\newtheorem{proposition}{Proposition}
\newtheorem{example}{Example}
\DeclareMathOperator{\Tr}{Tr}
\DeclareMathOperator{\PI}{PI}
\DeclareMathOperator{\PT}{PT}
\begin{document}
\title{Postmeasurement information and nonlocality of quantum state discrimination}
\author{Jinhyeok Heo}
\affiliation{Department of Mathematics, Kyung Hee University, Seoul 02447, Republic of Korea}
\author{Donghoon Ha}
\author{Jeong San Kim}
\email{freddie1@khu.ac.kr}
\affiliation{Department of Applied Mathematics and Institute of Natural Sciences, Kyung Hee University, Yongin 17104, Republic of Korea}
\begin{abstract}
In quantum state discrimination, nonlocality arises when optimal discrimination cannot be achieved using only local operations and classical communication. It has recently been shown that postmeasurement information identifying the subensemble containing the prepared state can either annihilate or create such nonlocality. Here, we demonstrate that these effects can depend critically on how the original ensemble is partitioned into subensembles. We derive sufficient conditions under which different ensemble partitions lead to the annihilation or creation of nonlocality upon the revelation of postmeasurement information. We further construct explicit bipartite quantum state ensembles satisfying these conditions, thereby illustrating the partition-dependent nature of nonlocality in quantum state discrimination.
\end{abstract}
\keywords{Quantum state discrimination, Nonlocality, Postmeasurement information}
\maketitle

\section{Introduction}\label{sec:int}
Quantum nonlocality is a fundamental feature of multipartite quantum systems. A paradigmatic manifestation of such nonlocality is quantum entanglement, which cannot be generated from separable states using only \emph{local operations and classical communication} (LOCC) \cite{horo2009,chit2014}. Entanglement serves as an essential resource for various tasks in multipartite quantum information processing, including quantum teleportation and quantum communication \cite{benn1992,benn1993}. These developments have also motivated extensive efforts to characterize other forms of quantum nonlocality, such as Bell nonlocality and quantum steering \cite{brun2014,uola2020}.

Nonlocality also arises in the discrimination of multipartite quantum states. For a given ensemble of quantum states, nonlocality is said to occur when globally optimal discrimination cannot be achieved using only LOCC \cite{chil2013}. In general, mutually orthogonal quantum states can be perfectly discriminated by global measurements, whereas nonorthogonal states cannot \cite{chef2000,barn2009,berg2010,bae2015}. Nevertheless, some sets of multipartite orthogonal quantum states cannot be perfectly discriminated using only LOCC \cite{benn19991,ghos2001,walg2002}. Moreover, certain ensembles of nonorthogonal quantum states cannot be optimally discriminated by LOCC \cite{pere1991,duan2007,chit2013}.
Furthermore, LOCC can activate nonlocality in state discrimination by transforming certain locally distinguishable ensembles into locally indistinguishable ones \cite{band2021,ghos2022,li2022,gupt2023}.

Recently, it was shown that nonlocality in quantum state discrimination can be either annihilated or created when \emph{postmeasurement information} (PI) identifying the subensemble containing the prepared state is provided \cite{ball2008,gopa2010,ha20221,ha20222}. Some ensembles that exhibit nonlocality without PI become optimally distinguishable by LOCC once the prepared subensemble is revealed. In this case, PI annihilates the nonlocality in state discrimination. Conversely, there exist ensembles that can be optimally discriminated by LOCC without PI but cease to be optimally distinguishable by LOCC when the prepared subensemble is revealed \cite{ha20221,ha20222}. In this case, PI creates nonlocality. These contrasting effects naturally raise the question of whether PI about the prepared subensemble can always be used to annihilate or create nonlocality in quantum state discrimination.

Here, we consider bipartite quantum state discrimination and show that the annihilation or creation of nonlocality by PI can depend on how the original ensemble is partitioned into subensembles. In particular, even when PI associated with one ensemble partition annihilates or creates nonlocality, PI associated with another partition need not have the same effect. We establish sufficient conditions for this partition-dependent annihilation or creation of nonlocality (Theorems~\ref{thm:alc} and \ref{thm:aulc}) and construct bipartite quantum state ensembles satisfying these conditions (Examples~\ref{ex:anpi} and \ref{ex:crpi}).

This paper is organized as follows. In Sec.~\ref{sec:pre}, we review the definitions and relevant properties of separable measurements and entanglement witnesses used to analyze nonlocality in quantum state discrimination. We also introduce \emph{minimum-error discrimination} (ME) and present several useful properties of nonlocality in ME. In Sec.~\ref{sec:qsdpi}, we review ME with postmeasurement information and establish sufficient conditions for nonlocality in this setting (Theorems~\ref{thm:scmp} and \ref{thm:iepi}). In Sec.~\ref{sec:lun}, we introduce the notions of annihilating and creating nonlocality by PI and derive sufficient conditions under which these effects depend on the choice of subensembles specified by PI (Theorems~\ref{thm:alc} and \ref{thm:aulc}). We then construct four-state ensembles for which nonlocality is annihilated or created depending on how the ensemble is partitioned into two-state subensembles (Examples~\ref{ex:anpi} and \ref{ex:crpi}). Finally, in Sec.~\ref{sec:dis}, we summarize our results and discuss directions for future research.

\section{Preliminary}\label{sec:pre}
For a bipartite Hilbert space $\mathcal{H}=\mathbb{C}^{d_{\rm A}}\otimes\mathbb{C}^{d_{\rm B}}$, let $\mathbb{H}$ be the set of all Hermitian operators acting on $\mathcal{H}$.
We denote by $\mathbb{H}_{+}$ the set of all positive-semidefinite operators in $\mathbb{H}$, that is,
\begin{equation}\label{eq:spso}
\mathbb{H}_{+}=\{E\in\mathbb{H}\,|\,\bra{v}E\ket{v}\geqslant0~\forall\ket{v}\in\mathcal{H}\}.
\end{equation}
A bipartite quantum state is described by a density operator $\rho$, that is, a positive-semidefinite operator $\rho\in\mathbb{H}_{+}$ with unit trace $\Tr\rho=1$.
A measurement is represented by a positive operator-valued measure $\{M_{i}\}_{i}$, that is, a set of positive-semidefinite operators $M_{i}\in\mathbb{H}_{+}$ satisfying 
the completeness relation $\sum_{i}M_{i}=\mathbbm{1}$, where $\mathbbm{1}$ is the identity operator in $\mathbb{H}$. 
For the  state $\rho$, the probability of obtaining the measurement outcome with respect to $M_{j}$ is $\Tr(\rho M_{j})$.

\begin{definition}\label{def:sep}
$E\in\mathbb{H}_{+}$ is called \emph{separable} if it can be described as
\begin{equation}\label{eq:sepod}
E=\sum_{l}A_{l}\otimes B_{l},
\end{equation}
where $A_{l}$ and $B_{l}$ are positive-semidefinite operators acting on $\mathbb{C}^{d_{\rm A}}$ and $\mathbb{C}^{d_{\rm B}}$ of $\mathcal{H}$, respectively.
\end{definition}

\subsection{Entanglement witness}\label{ssec:ew}
\noindent We denote the set of all \emph{separable} operators in $\mathbb{H}_{+}$ as
\begin{equation}\label{eq:sepdef}
\mathbb{SEP}=\{E\in\mathbb{H}_{+}\,|\, E:\mbox{separable}\},
\end{equation}
and its dual set as $\mathbb{SEP}^{*}$, that is,
\begin{equation}\label{eq:sepsd}
\mathbb{SEP}^{*}=\{E\in\mathbb{H}\,|\,\Tr(EF)\geqslant0~\forall F\in\mathbb{SEP}\}.
\end{equation}
An element in $\mathbb{SEP}^{*}$ is also called \emph{block positive}.

A measurement $\{M_{i}\}_{i}$ is called a \emph{separable measurement} if $M_{i}\in\mathbb{SEP}$ for all $i$, and a measurement is called a \emph{LOCC measurement} if it can be realized by LOCC. Note that every LOCC measurement is a separable measurement\cite{chit2014}.

\begin{definition}\label{def:ew}
$W\in\mathbb{H}$ is called an \emph{entanglement witness} (EW)
if $\Tr(\sigma W)\geqslant0$ for any state $\sigma$ in $\mathbb{SEP}$ but $\Tr(\rho W)<0$ for some state $\rho$ in $\mathbb{H}_{+}\backslash\mathbb{SEP}$, or equivalently
\begin{equation}\label{eq:wsshp}
W\in\mathbb{SEP}^{*}\backslash\mathbb{H}_{+}.
\end{equation}
\end{definition}

Throughout this paper, we only consider the situation of discriminating \emph{bipartite} quantum states from the ensemble of the form,
\begin{equation}\label{eq:ens}
\mathcal{E}=\{\eta_{i},\rho_{i}\}_{i\in\Lambda},~\Lambda=\{1,2,3,4\},
\end{equation}
where the state $\rho_{i}$ is prepared with the probability $\eta_{i}$. 

\subsection{Nonlocality in bipartite quantum state discrimination}\label{ssec:nbqsd}
Let us consider the quantum state discrimination of $\mathcal{E}$ in Eq.~\eqref{eq:ens} using a measurement $\mathcal{M}=\{M_{i}\}_{i\in\Lambda}$. Here, the detection of $M_{i}$ means that 
the prepared state is guessed to be $\rho_{i}$.
The \emph{ME} of $\mathcal{E}$ is to achieve the optimal success probability,
\begin{equation}\label{eq:pgdef}
p_{\rm G}(\mathcal{E})=\max_{\mathcal{M}}\sum_{i\in\Lambda}\eta_{i}\Tr(\rho_{i}M_{i}),
\end{equation}
where the maximum is taken over all possible measurements\cite{hels1969}.

When the available measurements are limited to LOCC measurements, 
we denote the maximum success probability by
\begin{equation}\label{eq:pldef}
p_{\rm L}(\mathcal{E})=\max_{\mathrm{LOCC}\,\mathcal{M}}\sum_{i\in\Lambda}\eta_{i}\Tr(\rho_{i}M_{i}).
\end{equation}
Similarly, we denote the maximum success probability over all possible separable measurements as
\begin{equation}\label{eq:pptdef}
p_{\rm SEP}(\mathcal{E})=\max_{\mathrm{Separable}\,\mathcal{M}}\sum_{i\in\Lambda}\eta_{i}\Tr(\rho_{i}M_{i}).
\end{equation}
From the definitions, we have
\begin{equation}\label{eq:plptpg}
\max\{\eta_{1},\ldots,\eta_{n}\}\leqslant p_{\rm L}(\mathcal{E})\leqslant p_{\rm SEP}(\mathcal{E})\leqslant p_{\rm G}(\mathcal{E}).
\end{equation}

In discriminating the states from the ensemble $\mathcal{E}$, quantum nonlocality occurs if the optimal success probability in Eq.~\eqref{eq:pgdef} cannot be achieved only by LOCC measurements, that is,
\begin{equation}\label{eq:qnoc}
p_{\rm L}(\mathcal{E})<p_{\rm G}(\mathcal{E}).
\end{equation}
From Inequality~\eqref{eq:plptpg}, we can easily see that Inequality~\eqref{eq:qnoc} holds if 
\begin{equation}\label{eq:pgsc}
p_{\rm SEP}(\mathcal{E})<p_{\rm G}(\mathcal{E}).
\end{equation}
Moreover, the following proposition provides a condition for Inequality~\eqref{eq:pgsc} in terms of EW.
\begin{proposition}[\cite{ha2023}]\label{prop:sgi}
For a bipartite quantum state ensemble $\mathcal{E}=\{\eta_{i},\rho_{i}\}_{i\in\Lambda}$ satisfying 
\begin{equation}\label{eq:prco}
\eta_{1}\rho_{1}-\eta_{i}\rho_{i}\in\mathbb{SEP}^{*}
\end{equation}
for all $i\in\Lambda$, we have
\begin{equation}\label{eq:sgi}
p_{\rm SEP}(\mathcal{E})<p_{\rm G}(\mathcal{E})
\end{equation}
if and only if there exists an EW in $\{\eta_{1}\rho_{1}-\eta_{i}\rho_{i}\}_{i\in\Lambda}$.
\end{proposition}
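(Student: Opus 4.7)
The plan is to use Lagrangian duality for the semidefinite programs defining $p_{\rm SEP}(\mathcal{E})$ and $p_{\rm G}(\mathcal{E})$, and to show that under hypothesis \eqref{eq:prco} the whole question reduces to whether each $\eta_{1}\rho_{1}-\eta_{i}\rho_{i}$ is additionally positive semidefinite. Writing the Lagrangian dual of \eqref{eq:pptdef} gives $\min\{\Tr K\,|\,K\in\mathbb{H},\ K-\eta_{i}\rho_{i}\in\mathbb{SEP}^{*}\ \forall i\in\Lambda\}$, while the dual of \eqref{eq:pgdef} is the same problem with the self-dual cone $\mathbb{H}_{+}$ in place of $\mathbb{SEP}^{*}$. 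Hypothesis \eqref{eq:prco} is precisely the statement that $K:=\eta_{1}\rho_{1}$ is dual-feasible for the separable problem, so weak duality gives $p_{\rm SEP}(\mathcal{E})\leqslant\Tr(\eta_{1}\rho_{1})=\eta_{1}$; and since the trivial measurement $M_{1}=\mathbbm{1}$, $M_{i}=0$ ($i\neq 1$) is separable (indeed LOCC) and attains success probability $\eta_{1}$, this pins down $p_{\rm SEP}(\mathcal{E})=\eta_{1}$.

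For the only-if direction I would argue by contraposition: if no element of $\{\eta_{1}\rho_{1}-\eta_{i}\rho_{i}\}_{i\in\Lambda}$ is an EW, then by \eqref{eq:wsshp} each such operator lies in $\mathbb{SEP}^{*}\cap\mathbb{H}_{+}=\mathbb{H}_{+}$, so $K=\eta_{1}\rho_{1}$ is dual-feasible for the global problem as well. Weak duality then forces $p_{\rm G}(\mathcal{E})\leqslant\eta_{1}=p_{\rm SEP}(\mathcal{E})$, which together with \eqref{eq:plptpg} gives equality and contradicts \eqref{eq:sgi}.

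For the if direction, suppose $W:=\eta_{1}\rho_{1}-\eta_{j}\rho_{j}$ is an EW for some $j\in\Lambda$; then $W\notin\mathbb{H}_{+}$, so there exists a unit vector $\ket{\psi}\in\mathcal{H}$ with $\bra{\psi}W\ket{\psi}<0$. I would then take the perturbed measurement $M_{1}=\mathbbm{1}-\epsilon\ket{\psi}\bra{\psi}$, $M_{j}=\epsilon\ket{\psi}\bra{\psi}$, and $M_{i}=0$ for $i\notin\{1,j\}$, with $0<\epsilon\leqslant 1$ to keep $M_{1}\in\mathbb{H}_{+}$. A direct computation of the success probability yields $\eta_{1}-\epsilon\bra{\psi}W\ket{\psi}>\eta_{1}=p_{\rm SEP}(\mathcal{E})$, so $p_{\rm G}(\mathcal{E})>p_{\rm SEP}(\mathcal{E})$.

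The main subtlety will be that $\ket{\psi}\bra{\psi}$ in the perturbation is typically entangled, so the improving measurement is allowed for $p_{\rm G}$ but not for $p_{\rm SEP}$ --- this is exactly the mechanism by which the two values split, and it hinges on the strict containment $\mathbb{H}_{+}\subsetneq\mathbb{SEP}^{*}$ witnessed by the EW. The remaining pieces are routine: a careful derivation of the two Lagrangian duals (weak duality suffices; no Slater argument is needed), the observation that $\mathbbm{1}\in\mathbb{SEP}$, and the self-duality of $\mathbb{H}_{+}$.
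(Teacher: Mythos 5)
Your argument is correct and is essentially the same as the method this paper uses for the analogous Theorems~\ref{thm:scmp} and \ref{thm:iepi} (Proposition~\ref{prop:sgi} itself is quoted from \cite{ha2023} without proof): your weak-duality certificate $K=\eta_{1}\rho_{1}$ is exactly the paper's telescoping insertion of $\eta_{1}\rho_{1}-\eta_{1}\rho_{1}$ into the success probability, and your perturbed measurement $M_{1}=\mathbbm{1}-\epsilon\ket{\psi}\!\bra{\psi}$, $M_{j}=\epsilon\ket{\psi}\!\bra{\psi}$ is the paper's rank-one construction with $\epsilon=1$. No gaps; the only cosmetic difference is that you phrase the inequality chain in SDP-duality language.
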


\section{Quantum state discrimination with postmeasurement information}\label{sec:qsdpi}

For a state ensemble $\mathcal{E}$ in Eq.~\eqref{eq:ens} and a two-element subset $S$ of the index set $\Lambda$, let us consider the subensembles,
\begin{equation}\label{eq:suben}
\mathcal{E}_{0}=\Big\{\frac{\eta_{i}}{\sum_{j\in S}\eta_{j}},\rho_{i}\Big\}_{i\in S},~
\mathcal{E}_{1}=\Big\{\frac{\eta_{i}}{\sum_{j\in S^{\mathsf{c}}}\eta_{j}},\rho_{i}\Big\}_{i\in S^{\mathsf{c}}},~S^{\mathsf{c}}=\Lambda\backslash S
\end{equation}
where $S^{\mathsf{c}}$ is the complement of $S$ in $\Lambda$.
For the case that the state $\rho_{i}$ belongs to $\mathcal{E}_{0}$ in Eq.~\eqref{eq:suben}, we note that the preparation of $\rho_{i}$ with probability $\eta_{i}$ from the ensemble $\mathcal{E}$ is equivalent to the preparation of the subensemble $\mathcal{E}_{0}$ with probability $\sum_{j\in S}\eta_{j}$ followed by the preparation of $\rho_{i}$ from $\mathcal{E}_{0}$ with probability $\eta_{i}/\sum_{j\in S}\eta_{j}$.
We denote by $\PI_{S}$ the classical information $b\in\{0,1\}$ about the prepared subensemble $\mathcal{E}_{b}$ defined in Eq.~\eqref{eq:suben}, that is
\begin{equation}\label{eq:pis}
\PI_{S}=\left\{
\begin{array}{rl}
0,&i\in S,\\ 
1,&i\in S^{\mathsf{c}},
\end{array}
\right.
\end{equation}
where $i$ is the index of the prepared state $\rho_{i}$.

Now, let us consider the situation of discriminating the quantum states from $\mathcal{E}$ when $\PI_{S}$ in Eq.~\eqref{eq:pis} is given.
In this situation, a measurement can be represented by a positive operator-valued measure $\tilde{\mathcal{M}}=\{\tilde{M}_{\vec{\omega}}\}_{\vec{\omega}\in\Omega_{S}}$ 
where the outcome space is the Cartesian product,
\begin{equation}
\Omega_{S}=S\times S^{\mathsf{c}}.
\end{equation}
Here, the detection of $\tilde{M}_{(\omega_{0},\omega_{1})}$ means that 
we guess the prepared state as $\rho_{\omega_{0}}$ or $\rho_{\omega_{1}}$ 
according to $\PI_{S}=0$ or $1$, respectively\cite{ball2008,gopa2010}.

\emph{ME of $\mathcal{E}$ with $\PI_{S}$} is to maximize the average probability of correct guessing where the optimal success probability is defined as
\begin{equation}\label{eq:pgpie}
p_{\rm G}^{\PI}(\mathcal{E},S)=
\max_{\tilde{\mathcal{M}}}\Big(
\sum_{i\in S}\eta_{i}\mathrm{Tr}\big[\rho_{i} \sum_{j\in S^{\mathsf{c}}}\tilde{M}_{(i,j)}\big]+\sum_{i\in S^{\mathsf{c}}}
\eta_{i}\mathrm{Tr}\big[\rho_{i} \sum_{j\in S}\tilde{M}_{(j,i)}\big]\Big),
\end{equation}
where the maximum is taken over all possible measurements. 
Note that when $\rho_{i}$ is prepared and $\PI_{S}=b$ is given, the prepared state is correctly guessed if we obtain a measurement outcome $\vec{\omega}\in\Omega_{S}$ with $\omega_{b}=i$.
Figure~\ref{fig:mepi} illustrates ME of $\mathcal{E}$ with $\PI_{S}$.
\begin{figure}[t]
\includegraphics{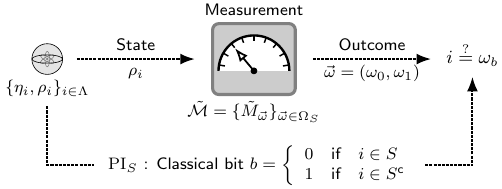}
\caption{
ME of $\mathcal{E}=\{\eta_{i},\rho_{i}\}_{i\in\Lambda}$ with $\PI_{S}$. For each $i\in\Lambda$, the state $\rho_{i}$ is prepared with the probability $\eta_{i}$. After performing a measurement $\tilde{\mathcal{M}}=\{\tilde{M}_{\vec\omega}\}_{\vec\omega\in\Omega_{S}}$, the classical bit $b$ which is $0$ if $i\in S$ and $1$ otherwise is given. 
For each measurement outcome $(\omega_{0},\omega_{1})=\vec\omega\in\Omega_{S}$, the prepared state is guessed to be $\rho_{\omega_{0}}$ or $\rho_{\omega_{1}}$ according to $\PI_{S}=0$ or $1$, respectively. When $\rho_{i}$ is prepared, it is correctly guessed if a measurement outcome $\vec\omega\in\Omega_{S}$ with $\omega_{b}=i$ is obtained; otherwise, errors occur in guessing the prepared state.
}\label{fig:mepi}
\end{figure}

When the available measurements are limited to LOCC measurements, we denote the maximum success probability by
\begin{equation}\label{eq:plpie}
p_{\rm L}^{\PI}(\mathcal{E},S)=
\max_{\mathrm{LOCC}\,\tilde{\mathcal{M}}}\Big(
\sum_{i\in S}\eta_{i}\mathrm{Tr}\big[\rho_{i} \sum_{j\in S^{\mathsf{c}}}\tilde{M}_{(i,j)}\big]
+\sum_{i\in S^{\mathsf{c}}}
\eta_{i}\mathrm{Tr}\big[\rho_{i}\sum_{j\in S}\tilde{M}_{(j,i)}\big]\Big).
\end{equation}
Similarly, we denote 
\begin{equation}\label{eq:pspe}
p_{\rm SEP}^{\PI}(\mathcal{E},S)
=\max_{\mathrm{Separable}\,\tilde{\mathcal{M}}}\Big(
\sum_{i\in S}\eta_{i}\mathrm{Tr}\big[\rho_{i} \sum_{j\in S^{\mathsf{c}}}\tilde{M}_{(i,j)}\big]+\sum_{i\in S^{\mathsf{c}}}
\eta_{i}\mathrm{Tr}\big[\rho_{i} \sum_{j\in S}\tilde{M}_{(j,i)}\big]\Big),
\end{equation}
where the maximum is taken over all possible separable measurements.
From the definitions, we have
\begin{equation}\label{eq:pppi}
p_{\rm L}^{\PI}(\mathcal{E},S)\leqslant 
p_{\rm SEP}^{\PI}(\mathcal{E},S)\leqslant 
p_{\rm G}^{\PI}(\mathcal{E},S).
\end{equation}

We note that for a given measurement $\{\tilde{M}_{\vec{\omega}}\}_{\vec{\omega}\in\Omega}$, the success probability, that is, the right-hand side of Eq.~\eqref{eq:pgpie} without maximization, can be rewritten as
\begin{equation}\label{eq:aplpie}
\sum_{i\in S}\eta_{i}\mathrm{Tr}\big[\rho_{i} \sum_{j\in S^{\mathsf{c}}}\tilde{M}_{(i,j)}\big]+\sum_{i\in S^{\mathsf{c}}}\eta_{i}\mathrm{Tr}\big[\rho_{i} \sum_{j\in S}\tilde{M}_{(j,i)}\big]
=2\sum_{\vec{\omega}\in\Omega_{S}}\tilde{\eta}_{\vec{\omega}}
\mathrm{Tr}(\tilde{\rho}_{\vec{\omega}}\tilde{M}_{\vec{\omega}})
\end{equation}
where 
\begin{equation}\label{eq:avpr}
\tilde{\eta}_{\vec{\omega}}
=\frac{1}{2}\sum_{b\in\{0,1\}}\eta_{\omega_{b}}
\end{equation}
is the average of the probabilities $\eta_{\omega_{0}}$ and $\eta_{\omega_{1}}$, and
\begin{equation}\label{eq:avrh}
\tilde{\rho}_{\vec{\omega}}
=\frac{\sum_{b\in\{0,1\}}\eta_{\omega_{b}}\rho_{\omega_{b}}}{\sum_{b\in\{0,1\}}\eta_{\omega_{b}}}
\end{equation}
is the average of the states $\rho_{\omega_{0}}$ and $\rho_{\omega_{1}}$ for $\vec\omega=(\omega_{0},\omega_{1})$ \cite{ha20221}.
Since $\{\tilde{\eta}_{\vec{\omega}}\}_{\vec{\omega}\in\Omega_{S}}$ and $\{\tilde{\rho}_{\vec{\omega}}\}_{\vec{\omega}\in\Omega_{S}}$ are a probability distribution and a set of states, respectively, Eqs.~\eqref{eq:pgpie},\,\eqref{eq:plpie},\,\eqref{eq:pspe} and \eqref{eq:aplpie} imply
\begin{eqnarray}
p_{\rm G}^{\PI}(\mathcal{E},S)&=&2p_{\rm G}(\tilde{\mathcal{E}}),\nonumber\\
p_{\rm L}^{\PI}(\mathcal{E},S)&=&2p_{\rm L}(\tilde{\mathcal{E}}),\nonumber\\
p_{\rm SEP}^{\PI}(\mathcal{E},S)&=&2p_{\rm SEP}(\tilde{\mathcal{E}}),\label{eq:plpie2}
\end{eqnarray}
where $\tilde{\mathcal{E}}$ is the ensemble consisting of the average states $\tilde{\rho}_{\vec{\omega}}$ prepared with the probabilities $\tilde{\eta}_{\vec{\omega}}$ in Eqs.~\eqref{eq:avpr} and \eqref{eq:avrh},
\begin{equation}\label{eq:tedef}
\tilde{\mathcal{E}}=\{\tilde{\eta}_{\vec{\omega}},\tilde{\rho}_{\vec{\omega}}\}_{\vec{\omega}\in\Omega_{S}}.
\end{equation}

In ME of $\mathcal{E}$ with $\PI_{S}$, quantum nonlocality occurs if the optimal success probability in Eq.~\eqref{eq:pgpie} cannot be achieved only by LOCC measurements, that is,
\begin{equation}\label{eq:defnlwemepi}
p_{\rm L}^{\PI}(\mathcal{E},S)<p_{\rm G}^{\PI}(\mathcal{E},S).
\end{equation}
From Inequality~\eqref{eq:pppi}, we can easily verify that Inequality~\eqref{eq:defnlwemepi} holds if 
\begin{equation}\label{eq:scnpi}
p_{\rm SEP}^{\PI}(\mathcal{E},S)<p_{\rm G}^{\PI}(\mathcal{E},S).
\end{equation}

For a fixed $\vec\mu=(\mu_{0},\mu_{1})\in\Omega_{S}$, the following theorem provides a necessary and sufficient condition for the maximum success probability $p_{\rm SEP}^{\PI}(\mathcal{E},S)$ in Eq.~\eqref{eq:pspe} to be obtained by guessing the prepared state as $\rho_{\mu_{0}}$ or $\rho_{\mu_{1}}$ according to $\rm{PI}_{S}=0$ or $1$, respectively.

\begin{theorem}\label{thm:scmp}
For a bipartite quantum state ensemble $\mathcal{E}=\{\eta_{i},\rho_{i}\}_{i\in\Lambda}$, a two-element subset $S$ of $\Lambda$ and $\vec{\mu}\in\Omega_{S}$,
\begin{equation}\label{eq:scmp}
p_{\rm SEP}^{\PI}(\mathcal{E},S)=2\tilde{\eta}_{\vec{\mu}}
\end{equation}
if and only if 
\begin{equation}\label{eq:scmpc}
\tilde{\eta}_{\vec{\mu}}\tilde{\rho}_{\vec{\mu}}-\tilde{\eta}_{\vec{\omega}}\tilde{\rho}_{\vec{\omega}}\in\mathbb{SEP}^{*}
\end{equation}
for all $\vec{\omega}\in\Omega_{S}$.
\end{theorem}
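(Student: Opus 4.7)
The plan is to first apply Eq.~\eqref{eq:plpie2} to reduce the statement to the equivalent claim
\begin{equation*}
p_{\rm SEP}(\tilde{\mathcal{E}})=\tilde{\eta}_{\vec{\mu}}\iff\tilde{\eta}_{\vec{\mu}}\tilde{\rho}_{\vec{\mu}}-\tilde{\eta}_{\vec{\omega}}\tilde{\rho}_{\vec{\omega}}\in\mathbb{SEP}^{*}~\forall\vec{\omega}\in\Omega_{S},
\end{equation*}
and then to sandwich $p_{\rm SEP}(\tilde{\mathcal{E}})$ between matching bounds. The trivial separable measurement assigning $\mathbbm{1}$ to the outcome $\vec{\mu}$ and $0$ to all others already gives the lower bound $p_{\rm SEP}(\tilde{\mathcal{E}})\geqslant\tilde{\eta}_{\vec{\mu}}$, so only the upper bound, together with its failure when the hypothesis fails, needs work.

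For the \emph{if} direction, I would take an arbitrary separable measurement $\{M_{\vec{\omega}}\}_{\vec{\omega}\in\Omega_{S}}$ and use completeness $\sum_{\vec{\omega}}M_{\vec{\omega}}=\mathbbm{1}$ together with $\Tr\tilde{\rho}_{\vec{\mu}}=1$ to rewrite
\begin{equation*}
\tilde{\eta}_{\vec{\mu}}-\sum_{\vec{\omega}\in\Omega_{S}}\tilde{\eta}_{\vec{\omega}}\Tr(\tilde{\rho}_{\vec{\omega}}M_{\vec{\omega}})=\sum_{\vec{\omega}\in\Omega_{S}}\Tr\bigl[(\tilde{\eta}_{\vec{\mu}}\tilde{\rho}_{\vec{\mu}}-\tilde{\eta}_{\vec{\omega}}\tilde{\rho}_{\vec{\omega}})M_{\vec{\omega}}\bigr].
\end{equation*}
By the defining property of $\mathbb{SEP}^{*}$ in Eq.~\eqref{eq:sepsd}, the assumed block positivity together with $M_{\vec{\omega}}\in\mathbb{SEP}$ makes every summand on the right nonnegative, so $p_{\rm SEP}(\tilde{\mathcal{E}})\leqslant\tilde{\eta}_{\vec{\mu}}$, matching the lower bound.

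For the \emph{only if} direction I would argue by contrapositive. Suppose $\tilde{\eta}_{\vec{\mu}}\tilde{\rho}_{\vec{\mu}}-\tilde{\eta}_{\vec{\omega}^{*}}\tilde{\rho}_{\vec{\omega}^{*}}\notin\mathbb{SEP}^{*}$ for some $\vec{\omega}^{*}$. Since $\mathbb{SEP}$ is the conic hull of product rank-one projectors, the witnessing separable operator may be taken to be of the form $F=\ket{a}\bra{a}\otimes\ket{b}\bra{b}$ with $\Tr[(\tilde{\eta}_{\vec{\mu}}\tilde{\rho}_{\vec{\mu}}-\tilde{\eta}_{\vec{\omega}^{*}}\tilde{\rho}_{\vec{\omega}^{*}})F]<0$. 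I would then perturb the trivial measurement by setting $\tilde{M}_{\vec{\omega}^{*}}=\epsilon F$, $\tilde{M}_{\vec{\mu}}=\mathbbm{1}-\epsilon F$, and $\tilde{M}_{\vec{\omega}}=0$ for all other $\vec{\omega}$, with a small $\epsilon\in(0,1]$. Extending $\ket{a},\ket{b}$ to product orthonormal bases $\{\ket{a_{i}}\},\{\ket{b_{j}}\}$ and using $\mathbbm{1}=\sum_{i,j}\ket{a_{i}}\bra{a_{i}}\otimes\ket{b_{j}}\bra{b_{j}}$ with $\ket{a_{0}}=\ket{a}$, $\ket{b_{0}}=\ket{b}$ exhibits
\begin{equation*}
\mathbbm{1}-\epsilon F=(1-\epsilon)\ket{a_{0}}\bra{a_{0}}\otimes\ket{b_{0}}\bra{b_{0}}+\sum_{(i,j)\neq(0,0)}\ket{a_{i}}\bra{a_{i}}\otimes\ket{b_{j}}\bra{b_{j}}
\end{equation*}
as a nonnegative sum of product operators, hence separable. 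A direct computation then gives success probability $\tilde{\eta}_{\vec{\mu}}-\epsilon\Tr\bigl[(\tilde{\eta}_{\vec{\mu}}\tilde{\rho}_{\vec{\mu}}-\tilde{\eta}_{\vec{\omega}^{*}}\tilde{\rho}_{\vec{\omega}^{*}})F\bigr]>\tilde{\eta}_{\vec{\mu}}$, ruling out $p_{\rm SEP}(\tilde{\mathcal{E}})=\tilde{\eta}_{\vec{\mu}}$.

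The main obstacle is the verification that $\mathbbm{1}-\epsilon F$ is genuinely separable and not just positive semidefinite, since subtracting a separable operator from a separable one can in general destroy separability; the aligned product-basis decomposition of $\mathbbm{1}$ is the crucial device that keeps the perturbed effect inside $\mathbb{SEP}$ and thereby produces a bona fide separable measurement beating $\tilde{\eta}_{\vec{\mu}}$.
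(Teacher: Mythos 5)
Your proposal is correct and follows essentially the same route as the paper: reduce to the averaged ensemble $\tilde{\mathcal{E}}$ via Eq.~\eqref{eq:plpie2}, obtain the upper bound $p_{\rm SEP}(\tilde{\mathcal{E}})\leqslant\tilde{\eta}_{\vec{\mu}}$ from completeness plus block positivity paired with separable effects, and refute equality in the converse by perturbing the trivial measurement with a product rank-one witness. The only difference is cosmetic: the paper takes your $\epsilon=1$ and simply asserts that the two-outcome POVM $\{\mathbbm{1}-\ket{v}\!\bra{v},\ket{v}\!\bra{v}\}$ is separable, whereas you supply the aligned product-basis decomposition that justifies this.
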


\begin{proof}
Let us suppose Eq.~\eqref{eq:scmp}.
To show Inclusion~\eqref{eq:scmpc}, we first assume
\begin{equation}\label{eq:nsps}
\tilde\eta_{\vec\mu}\tilde\rho_{\vec\mu}-\tilde\eta_{\vec\omega^{\circ}}\tilde\rho_{\vec\omega^{\circ}}\notin\mathbb{SEP}^{*}
\end{equation}
for some $\vec\omega^{\circ}\in\Omega_{S}$ and lead to a contradiction.
This assumption implies that there exists a pure product state $\ket{v}\!\bra{v}\in\mathbb{SEP}$ satisfying
\begin{equation}\label{eq:psnz}
\bra{v}(\tilde\eta_{\vec\mu}\tilde\rho_{\vec\mu}-\tilde\eta_{\vec\omega^{\circ}}\tilde\rho_{\vec\omega^{\circ}})\ket{v}<0.
\end{equation} 
For a separable measurement $\{\tilde{M}_{\vec\omega}\}_{\vec\omega\in\Omega_{S}}$ with
\begin{equation}\label{eq:smthmo}
{\tilde M}_{\vec\omega}=\left\{
\begin{array}{ccl}
\mathbbm{1}-\ket{v}\!\bra{v}&,&\vec\omega=\vec\mu\\
\ket{v}\!\bra{v}&,&\vec\omega=\vec\omega^{\circ}\\
\mathbb{O}&,&\text{otherwise}
\end{array}
\right.
\end{equation} 
where $\mathbb{O}$ is the zero operator in $\mathbb{H}$, we have 
\begin{eqnarray}
p_{\rm SEP}^{\rm PI}(\mathcal{E},S)
&=&2p_{\rm SEP}(\tilde{\mathcal{E}})\nonumber\\
&\geqslant&2\sum_{\vec\omega\in\Omega_{S}}\tilde\eta_{\vec\omega}\Tr(\tilde\rho_{\vec\omega}\tilde{M}_{\vec\omega})\nonumber\\
&=&2\sum_{\vec\omega\in\Omega_{S}}\Tr\Big[(\tilde\eta_{\vec\mu}\tilde\rho_{\vec\mu}-\tilde\eta_{\vec\mu}\tilde\rho_{\vec\mu}+\tilde\eta_{\vec\omega}\tilde\rho_{\vec\omega})\tilde{M}_{\vec\omega}\Big]\nonumber\\
&=&2\tilde\eta_{\vec\mu}\Tr(\tilde\rho_{\vec\mu}\sum_{\vec\omega\in\Omega_{S}}\tilde{M}_{\vec\omega})
-2\sum_{\vec\omega\in\Omega_{S}}\Tr[(\tilde\eta_{\vec\mu}\tilde\rho_{\vec\mu}-\tilde\eta_{\vec\omega}\tilde\rho_{\vec\omega})\tilde{M}_{\vec\omega}]\nonumber\\
&=&2\tilde\eta_{\vec\mu}-2\bra{v}(\tilde\eta_{\vec\mu}\tilde\rho_{\vec\mu}-\tilde\eta_{\vec\omega^{\circ}}\tilde\rho_{\vec\omega^{\circ}})\ket{v}>2\tilde\eta_{\vec\mu},\label{eq:psctd}
\end{eqnarray}
where the first equality is by Eq.~\eqref{eq:plpie2}, the first inequality is from the definition of $p_{\rm SEP}(\tilde{\mathcal{E}})$ in Eq.~\eqref{eq:pptdef}, the last equality is due to $\sum_{\vec\omega\in\Omega_{S}}\tilde{M}_{\vec\omega}=\mathbbm{1}$ together with Eq.~\eqref{eq:smthmo}, and the last inequality follows from Inequality~\eqref{eq:psnz}.
Inequality \eqref{eq:psctd} contradicts Eq.~\eqref{eq:scmp}, therefore $\tilde{\eta}_{\vec{\mu}}\tilde{\rho}_{\vec{\mu}}-\tilde{\eta}_{\vec{\omega}}\tilde{\rho}_{\vec{\omega}}\in\mathbb{SEP}^{*}$ for all $\vec\omega\in\Omega_{S}$.

Conversely, let us suppose Inclusion~\eqref{eq:scmpc}. For a separable measurement $\tilde{\mathcal{M}}=\{\tilde{M}_{\vec\omega}\}_{\vec\omega\in\Omega_{S}}$ giving $p_{\rm SEP}(\tilde{\mathcal{E}})$, we have 
\begin{eqnarray}
\tilde\eta_{\vec\mu}&\leqslant& p_{\rm SEP}(\tilde{\mathcal{E}})\nonumber\\
&=&\sum_{\vec\omega\in\Omega_{S}}\tilde\eta_{\vec\omega}\Tr(\tilde\rho_{\vec\omega}\tilde{M}_{\vec\omega})\nonumber\\
&=&\sum_{\vec\omega\in\Omega_{S}}\Tr\Big[(\tilde\eta_{\vec\mu}\tilde\rho_{\vec\mu}-\tilde\eta_{\vec\mu}\tilde\rho_{\vec\mu}+\tilde\eta_{\vec\omega}\tilde\rho_{\vec\omega})\tilde{M}_{\vec\omega}\Big]\nonumber\\
&=&\tilde\eta_{\vec\mu}\Tr(\tilde\rho_{\vec\mu}\sum_{\vec\omega\in\Omega_{S}}\tilde{M}_{\vec\omega})-\sum_{\vec\omega\in\Omega_{S}}\Tr[(\tilde\eta_{\vec\mu}\tilde\rho_{\vec\mu}
-\tilde\eta_{\vec\omega}\tilde\rho_{\vec\omega})\tilde{M}_{\vec\omega}]\nonumber\\
&\leqslant&\tilde\eta_{\vec\mu}\Tr(\tilde\rho_{\vec\mu}\sum_{\vec\omega\in\Omega_{S}}\tilde{M}_{\vec\omega})=\tilde\eta_{\vec\mu},\label{eq:slem}
\end{eqnarray}
where the first inequality is by Inequality~\eqref{eq:plptpg}, the first equality is from the assumption of $\tilde{\mathcal{M}}$, the last inequality is due to $\tilde{\eta}_{\vec{\mu}}\tilde{\rho}_{\vec{\mu}}-\tilde{\eta}_{\vec{\omega}}\tilde{\rho}_{\vec{\omega}}\in\mathbb{SEP}^{*}$ and $\tilde{M}_{\vec\omega}\in\mathbb{SEP}$ for all $\vec\omega\in\Omega_{S}$, and the last equality is from $\sum_{\vec\omega\in\Omega_{S}}\tilde{M}_{\vec\omega}=\mathbbm{1}$.
Since Inequality~\eqref{eq:slem} means $p_{\rm SEP}(\tilde{\mathcal{E}})=\tilde\eta_{\vec\mu}$, it follows from Eq.~\eqref{eq:plpie2} that Eq.~\eqref{eq:scmp} is satisfied.
\end{proof}

For an ensemble $\mathcal{E}=\{\eta_{i},\rho_{i}\}_{i\in\Lambda}$ and a two-element subset $S$ of $\Lambda$ satisfying Eq.~\eqref{eq:scmp} for some $\vec{\mu}\in\Omega_{S}$, the following theorem provides a necessary and sufficient condition for nonlocality in terms of ME with PI.

\begin{theorem}\label{thm:iepi}
For a bipartite quantum state ensemble $\mathcal{E}=\{\eta_{i},\rho_{i}\}_{i\in\Lambda}$, a two-element subset $S$ of $\Lambda$ and $\vec{\mu}\in\Omega_{S}$ satisfying Condition~\eqref{eq:scmpc},
\begin{equation}\label{eq:iepi}
p_{\rm SEP}^{\PI}(\mathcal{E},S)<p_{\rm G}^{\PI}(\mathcal{E},S)
\end{equation}
if and only if there exists an EW in $\{\tilde{\eta}_{\vec{\mu}}\tilde{\rho}_{\vec{\mu}}-\tilde{\eta}_{\vec{\omega}}\tilde{\rho}_{\vec{\omega}}\}_{\vec{\omega}\in\Omega_{S}}$.
\end{theorem}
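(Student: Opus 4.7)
My plan is to reduce the assertion to the analogous statement already proved for the ordinary ME setting, namely Proposition~\ref{prop:sgi}, by passing to the averaged ensemble $\tilde{\mathcal{E}}=\{\tilde{\eta}_{\vec{\omega}},\tilde{\rho}_{\vec{\omega}}\}_{\vec{\omega}\in\Omega_{S}}$ defined in Eq.~\eqref{eq:tedef}. The key input is the identity~\eqref{eq:plpie2}, which tells us $p_{\rm SEP}^{\PI}(\mathcal{E},S)=2p_{\rm SEP}(\tilde{\mathcal{E}})$ and $p_{\rm G}^{\PI}(\mathcal{E},S)=2p_{\rm G}(\tilde{\mathcal{E}})$. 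Consequently, the desired strict inequality \eqref{eq:iepi} is equivalent to the strict gap $p_{\rm SEP}(\tilde{\mathcal{E}})<p_{\rm G}(\tilde{\mathcal{E}})$ for the ordinary ME problem on $\tilde{\mathcal{E}}$.

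Next I would verify that Proposition~\ref{prop:sgi} can indeed be invoked on $\tilde{\mathcal{E}}$. The hypothesis~\eqref{eq:prco} in Proposition~\ref{prop:sgi} singles out one index (labeled $1$); here the role of that distinguished index is played by $\vec{\mu}$. After relabeling the outcome set $\Omega_{S}$ so that $\vec{\mu}$ becomes the first index, the assumed inclusion~\eqref{eq:scmpc}—that $\tilde{\eta}_{\vec{\mu}}\tilde{\rho}_{\vec{\mu}}-\tilde{\eta}_{\vec{\omega}}\tilde{\rho}_{\vec{\omega}}\in\mathbb{SEP}^{*}$ for every $\vec{\omega}\in\Omega_{S}$—is exactly the block-positivity requirement~\eqref{eq:prco} needed by the proposition.

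Applying Proposition~\ref{prop:sgi} to $\tilde{\mathcal{E}}$ then yields that $p_{\rm SEP}(\tilde{\mathcal{E}})<p_{\rm G}(\tilde{\mathcal{E}})$ if and only if some element of the family $\{\tilde{\eta}_{\vec{\mu}}\tilde{\rho}_{\vec{\mu}}-\tilde{\eta}_{\vec{\omega}}\tilde{\rho}_{\vec{\omega}}\}_{\vec{\omega}\in\Omega_{S}}$ is an EW, which combined with the reduction above delivers \eqref{eq:iepi}. No substantive new estimate is required, and I do not anticipate a genuine obstacle; the only point needing care is to make the relabeling explicit so that the family displayed in the conclusion matches the one produced by Proposition~\ref{prop:sgi}, and to note that the statement of Proposition~\ref{prop:sgi} is manifestly invariant under such relabeling of the ensemble's index set.
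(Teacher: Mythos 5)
Your proposal is correct, but it takes a genuinely different route from the paper. You reduce the claim to Proposition~\ref{prop:sgi} applied to the averaged ensemble $\tilde{\mathcal{E}}$ of Eq.~\eqref{eq:tedef}: the identity~\eqref{eq:plpie2} converts the gap $p_{\rm SEP}^{\PI}(\mathcal{E},S)<p_{\rm G}^{\PI}(\mathcal{E},S)$ into $p_{\rm SEP}(\tilde{\mathcal{E}})<p_{\rm G}(\tilde{\mathcal{E}})$, Condition~\eqref{eq:scmpc} becomes exactly the hypothesis~\eqref{eq:prco} with $\vec{\mu}$ as the distinguished index, and the proposition's conclusion is verbatim the EW criterion you need. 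The reduction is sound --- $\{\tilde{\eta}_{\vec{\omega}}\}_{\vec{\omega}\in\Omega_{S}}$ is a genuine probability distribution (since $|S|=|S^{\mathsf c}|=2$ the weights sum to $1$), each $\tilde{\rho}_{\vec{\omega}}$ is a state, and $\Omega_{S}$ has four elements so even a four-state reading of Proposition~\ref{prop:sgi} applies after relabeling. The paper instead gives a self-contained proof: it first establishes Theorem~\ref{thm:scmp}, which pins down the explicit value $p_{\rm SEP}^{\PI}(\mathcal{E},S)=2\tilde{\eta}_{\vec{\mu}}$ under Condition~\eqref{eq:scmpc}, and then essentially re-derives the content of Proposition~\ref{prop:sgi} for $\tilde{\mathcal{E}}$ by hand --- assuming all operators $\tilde{\eta}_{\vec{\mu}}\tilde{\rho}_{\vec{\mu}}-\tilde{\eta}_{\vec{\omega}}\tilde{\rho}_{\vec{\omega}}$ lie in $\mathbb{H}_{+}$ to bound $p_{\rm G}^{\PI}$ from above in one direction, and constructing an explicit two-outcome measurement from an entangled vector $\ket{u}$ violating the witness in the other. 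Your approach buys brevity and modularity (no new estimates, no explicit measurements); the paper's buys the quantitative statement $p_{\rm SEP}^{\PI}(\mathcal{E},S)=2\tilde{\eta}_{\vec{\mu}}$ and independence from the external reference. The only cosmetic gap in your write-up is that you should state explicitly that $\tilde{\mathcal{E}}$ is a legitimate ensemble of the type Proposition~\ref{prop:sgi} covers, which the paper already records just before Eq.~\eqref{eq:plpie2}.
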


\begin{proof}
Let us suppose Eq.~\eqref{eq:iepi}.
To show the existence of EW in $\{\tilde{\eta}_{\vec{\mu}}\tilde{\rho}_{\vec{\mu}}-\tilde{\eta}_{\vec{\omega}}\tilde{\rho}_{\vec{\omega}}\}_{\vec{\omega}\in\Omega_{S}}\subseteq\mathbb{SEP}^{*}$, we first assume
\begin{equation}\label{eq:asps}
\tilde\eta_{\vec\mu}\tilde\rho_{\vec\mu}-\tilde\eta_{\vec{\omega}}\tilde\rho_{\vec{\omega}}\in\mathbb{H}_{+}
\end{equation}
for all $\vec{\omega}\in\Omega_{S}$ and lead to a contradiction.
For a measurement $\tilde{\mathcal{M}}=\{\tilde{M}_{\vec\omega}\}_{\vec\omega\in\Omega_{S}}$ providing $p_{\rm G}(\tilde{\mathcal{E}})$, we have 
\begin{eqnarray}
p_{\rm G}^{\rm PI}(\mathcal{E},S)&=&2p_{\rm G}(\tilde{\mathcal{E}})\nonumber\\
&=&2\sum_{\vec\omega\in\Omega_{S}}\tilde\eta_{\vec\omega}\Tr(\tilde\rho_{\vec\omega}\tilde{M}_{\vec\omega})\nonumber\\
&=&2\sum_{\vec\omega\in\Omega_{S}}\Tr\Big[(\tilde\eta_{\vec\mu}\tilde\rho_{\vec\mu}-\tilde\eta_{\vec\mu}\tilde\rho_{\vec\mu}+\tilde\eta_{\vec\omega}\tilde\rho_{\vec\omega})\tilde{M}_{\vec\omega}\Big]\nonumber\\
&=&2\tilde\eta_{\vec\mu}\Tr(\tilde\rho_{\vec\mu}\sum_{\vec\omega\in\Omega_{S}}\tilde{M}_{\vec\omega})
-2\sum_{\vec\omega\in\Omega_{S}}\Tr[(\tilde\eta_{\vec\mu}\tilde\rho_{\vec\mu}-\tilde\eta_{\vec\omega}\tilde\rho_{\vec\omega})\tilde{M}_{\vec\omega}]\nonumber\\
&\leqslant&2\tilde\eta_{\vec\mu}\Tr(\tilde\rho_{\vec\mu}\sum_{\vec\omega\in\Omega_{S}}\tilde{M}_{\vec\omega})\nonumber\\
&=&2\tilde\eta_{\vec\mu}=p_{\rm SEP}^{\rm PI}(\mathcal{E},S),\label{eq:pgue}
\end{eqnarray}
where the first equality is due to Eq.~\eqref{eq:plpie2}, the second equality is from the assumption of $\tilde{\mathcal{M}}$, the inequality follows from Inclusion~\eqref{eq:asps} together with $\tilde{M}_{\vec\omega}\in\mathbb{H}_{+}$ for all $\vec\omega\in\Omega_{S}$, the fifth equality is by $\sum_{\vec\omega\in\Omega_{S}}\tilde{M}_{\vec\omega}=\mathbbm{1}$, and the last equality is from Theorem~\ref{thm:scmp}.
Since Inequality \eqref{eq:pgue} contradicts Inequality~\eqref{eq:iepi}, $\{\tilde{\eta}_{\vec{\mu}}\tilde{\rho}_{\vec{\mu}}-\tilde{\eta}_{\vec{\omega}}\tilde{\rho}_{\vec{\omega}}\}_{\vec{\omega}\in\Omega_{S}}\subseteq\mathbb{SEP}^{*}$ is not a subset of $\mathbb{H}_{+}$. 
Thus, there exists an EW in $\{\tilde{\eta}_{\vec{\mu}}\tilde{\rho}_{\vec{\mu}}-\tilde{\eta}_{\vec{\omega}}\tilde{\rho}_{\vec{\omega}}\}_{\vec{\omega}\in\Omega_{S}}$.

Conversely, let us suppose that $\tilde{\eta}_{\vec{\mu}}\tilde{\rho}_{\vec{\mu}}-\tilde{\eta}_{\vec{\omega}^{\bullet}}\tilde{\rho}_{\vec{\omega}^{\bullet}}$ is an EW for some $\vec\omega^{\bullet}\in\Omega_{S}$. 
Due to this assumption, there exists a pure state $\ket{u}\!\bra{u}\in\mathbb{H}_{+}\backslash\mathbb{SEP}$ satisfying 	
\begin{equation}\label{eq:ewvpz}
\bra{u}(\tilde{\eta}_{\vec{\mu}}\tilde{\rho}_{\vec{\mu}}-\tilde{\eta}_{\vec\omega^{\bullet}}\tilde{\rho}_{\vec\omega^{\bullet}})\ket{u}<0.
\end{equation}
For a measurement $\tilde{\mathcal{M}}=\{\tilde{M}_{\vec\omega}\}_{\vec\omega\in\Omega_{S}}$ with
\begin{equation}\label{eq:amew}
\tilde M_{\vec\mu}=\mathbbm{1}-\ket{u}\!\bra{u},~
\tilde M_{\vec\omega^{\bullet}}=\ket{u}\!\bra{u},
\end{equation} 
we have
\begin{eqnarray}
p_{\rm G}^{\rm PI}(\mathcal{E},S)&=&2p_{\rm G}(\tilde{\mathcal{E}})\nonumber\\
&\geqslant&2\sum_{\vec\omega\in\Omega_{S}}\Tr(\tilde\eta_{\vec\omega}\tilde\rho_{\vec\omega}\tilde{M}_{\vec\omega})\nonumber\\
&=&2\sum_{\vec\omega\in\Omega_{S}}\Tr\Big[(\tilde\eta_{\vec\mu}\tilde\rho_{\vec\mu}-\tilde\eta_{\vec\mu}\tilde\rho_{\vec\mu}+\tilde\eta_{\vec\omega}\tilde\rho_{\vec\omega})\tilde{M}_{\vec\omega}\Big]\nonumber\\
&=&2\tilde\eta_{\vec\mu}\Tr(\tilde\rho_{\vec\mu}\sum_{\vec\omega\in\Omega_{S}}\tilde{M}_{\vec\omega})
-2\sum_{\vec\omega\in\Omega_{S}}\Tr[(\tilde\eta_{\vec\mu}\tilde\rho_{\vec\mu}-\tilde\eta_{\vec\omega}\tilde\rho_{\vec\omega})\tilde{M}_{\vec\omega}]\nonumber\\&=&2\tilde\eta_{\vec\mu}-2\bra{u}(\tilde\eta_{\vec\mu}\tilde\rho_{\vec\mu}-\tilde\eta_{\vec\omega^{\bullet}}\tilde\rho_{\vec\omega^{\bullet}})\ket{u}\nonumber\\[2mm]
&>&2\tilde\eta_{\vec\mu}=p_{\rm SEP}^{\rm PI}(\mathcal{E},S),\label{eq:pgobe}
\end{eqnarray}
where the first equality is by Eq.~\eqref{eq:plpie2}, the first inequality is from the definition of $p_{\rm G}(\tilde{\mathcal{E}})$ in Eq.~\eqref{eq:pgdef}, the fourth equality is due to $\sum_{\vec\omega\in\Omega_{S}}\tilde{M}_{\vec\omega}=\mathbbm{1}$ together with Eq.~\eqref{eq:amew}, and the last inequality is from Inequality~\eqref{eq:ewvpz}, and the last equality is from Theorem~\ref{thm:scmp}.
Thus, Inequality~\eqref{eq:pgobe} leads us to Inequality~\eqref{eq:iepi}.
\end{proof}

\section{Annihilating and creating nonlocality in quantum state discrimination from the choice of postmeasurement information}\label{sec:lun}
In this section, we first recall the definitions of annihilating and creating nonlocality by PI.
By using properties of EW related to nonlocality in discriminating quantum states, we provide four-state ensembles to show PI dependency of annihilating or creating nonlocality in quantum state discrimination.
\begin{definition}
For an ensemble $\mathcal{E}$ in Eq.~\eqref{eq:ens} and a two-element subset $S$ of $\Lambda$, we say that $\PI_{S}$ \emph{annihilates} nonlocality if nonlocality occurs in discriminating the states of $\mathcal{E}$ and
the availability of $\PI_{S}$ vanishes the occurrence of nonlocality, that is,
\begin{equation}\label{eq:annl}
p_{\rm L}(\mathcal{E})<p_{\rm G}(\mathcal{E}),~
p_{\rm L}^{\PI}(\mathcal{E},S)=p_{\rm G}^{\PI}(\mathcal{E},S).
\end{equation}
Also, we say that $\PI_{S}$ \emph{creates} nonlocality if nonlocality does not occur in discriminating the states of $\mathcal{E}$ and
the availability of $\PI_{S}$ releases the occurrence of nonlocality, that is,
\begin{equation}\label{eq:ctnl}
p_{\rm L}(\mathcal{E})=p_{\rm G}(\mathcal{E}),~
p_{\rm L}^{\PI}(\mathcal{E},S)<p_{\rm G}^{\PI}(\mathcal{E},S).
\end{equation}
\end{definition}

The following theorem establishes a sufficient condition for nonlocality arising in discriminating quantum states to be annihilated depending on the choice of subensembles provided by PI.\\
\begin{theorem}\label{thm:alc}
For a bipartite quantum state ensemble $\mathcal{E}=\{\eta_{i},\rho_{i}\}_{i\in\Lambda}$ with $\Lambda=\{1,2,3,4\}$ satisfying
\begin{flalign}
&\eta_{1}\rho_{1}-\eta_{2}\rho_{2}\in\mathbb{H}_{+},\nonumber\\
&\eta_{1}\rho_{1}-\eta_{3}\rho_{3}\in\mathbb{SEP}^{*}\backslash\mathbb{H}_{+},\nonumber\\
&\eta_{2}\rho_{2}-\eta_{4}\rho_{4}\in\mathbb{SEP}^{*},\nonumber\\
&\eta_{3}\rho_{3}-\eta_{4}\rho_{4}\in\mathbb{H}_{+},
\label{eq:alc}
\end{flalign}
$\PI_{\{1,2\}}$ annihilates nonlocality but $\PI_{\{1,3\}}$ does not annihilate nonlocality.
\end{theorem}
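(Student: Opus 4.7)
The statement reduces to three assertions: (i) $p_{\rm L}(\mathcal{E})<p_{\rm G}(\mathcal{E})$; (ii) $p_{\rm L}^{\PI}(\mathcal{E},\{1,2\})=p_{\rm G}^{\PI}(\mathcal{E},\{1,2\})$; and (iii) $p_{\rm L}^{\PI}(\mathcal{E},\{1,3\})<p_{\rm G}^{\PI}(\mathcal{E},\{1,3\})$. The plan is to dispatch (i) with Proposition~\ref{prop:sgi} and to handle (ii)--(iii) by applying Theorem~\ref{thm:iepi} with two different pivots $\vec{\mu}$; throughout I will use the inclusion $\mathbb{H}_{+}\subseteq\mathbb{SEP}^{*}$ and the closure of both cones under addition to combine the inclusions in \eqref{eq:alc}.

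For (i), I first verify $\eta_{1}\rho_{1}-\eta_{i}\rho_{i}\in\mathbb{SEP}^{*}$ for all $i\in\Lambda$. The cases $i=1,2,3$ are immediate from \eqref{eq:alc}; for $i=4$, I would split $\eta_{1}\rho_{1}-\eta_{4}\rho_{4}=(\eta_{1}\rho_{1}-\eta_{3}\rho_{3})+(\eta_{3}\rho_{3}-\eta_{4}\rho_{4})$ to land again in $\mathbb{SEP}^{*}$. Since $\eta_{1}\rho_{1}-\eta_{3}\rho_{3}\in\mathbb{SEP}^{*}\setminus\mathbb{H}_{+}$ is itself an EW inside the family, Proposition~\ref{prop:sgi} gives $p_{\rm SEP}(\mathcal{E})<p_{\rm G}(\mathcal{E})$ and therefore $p_{\rm L}(\mathcal{E})<p_{\rm G}(\mathcal{E})$ by \eqref{eq:plptpg}.

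For (ii), I would choose the pivot $\vec{\mu}=(1,3)\in\Omega_{\{1,2\}}$, so that $2\tilde{\eta}_{\vec{\mu}}\tilde{\rho}_{\vec{\mu}}=\eta_{1}\rho_{1}+\eta_{3}\rho_{3}$. The four differences $\tilde{\eta}_{\vec{\mu}}\tilde{\rho}_{\vec{\mu}}-\tilde{\eta}_{\vec{\omega}}\tilde{\rho}_{\vec{\omega}}$ then reduce (up to a factor $\tfrac12$) to linear combinations of $\eta_{1}\rho_{1}-\eta_{2}\rho_{2}$ and $\eta_{3}\rho_{3}-\eta_{4}\rho_{4}$, both of which lie in $\mathbb{H}_{+}$ by \eqref{eq:alc}. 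Hence no EW appears in the family, so Theorem~\ref{thm:iepi} yields $p_{\rm SEP}^{\PI}(\mathcal{E},\{1,2\})=p_{\rm G}^{\PI}(\mathcal{E},\{1,2\})=2\tilde{\eta}_{(1,3)}=\eta_{1}+\eta_{3}$. I would then exhibit the trivial product measurement $\tilde{M}_{(1,3)}=\mathbbm{1}$, $\tilde{M}_{\vec{\omega}}=0$ otherwise, which is LOCC and attains $\eta_{1}+\eta_{3}$, forcing $p_{\rm L}^{\PI}(\mathcal{E},\{1,2\})=p_{\rm G}^{\PI}(\mathcal{E},\{1,2\})$.

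For (iii), I would choose $\vec{\mu}=(1,2)\in\Omega_{\{1,3\}}$. The four corresponding differences simplify to $0$, $\eta_{2}\rho_{2}-\eta_{4}\rho_{4}$, $\eta_{1}\rho_{1}-\eta_{3}\rho_{3}$, and their sum (again up to $\tfrac12$), all of which lie in $\mathbb{SEP}^{*}$ by \eqref{eq:alc}; in particular, the third inherits the EW property directly from $\eta_{1}\rho_{1}-\eta_{3}\rho_{3}$. Condition~\eqref{eq:scmpc} therefore holds at $\vec{\mu}=(1,2)$, and Theorem~\ref{thm:iepi} delivers $p_{\rm SEP}^{\PI}(\mathcal{E},\{1,3\})<p_{\rm G}^{\PI}(\mathcal{E},\{1,3\})$, hence $p_{\rm L}^{\PI}<p_{\rm G}^{\PI}$ by \eqref{eq:pppi}. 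The main obstacle is guessing the right pivots: the inclusions in \eqref{eq:alc} are tailored so that $\vec{\mu}=(1,3)$ for $S=\{1,2\}$ absorbs the EW-carrying pair of indices into $\tilde{\rho}_{\vec{\mu}}$ itself, rendering every difference positive semidefinite, whereas for $S=\{1,3\}$ no such absorption is possible and the EW $\eta_{1}\rho_{1}-\eta_{3}\rho_{3}$ must reappear among the differences.
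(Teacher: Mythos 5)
Your proposal is correct and follows essentially the same route as the paper's own proof: the same pivots $\vec{\mu}=(1,3)$ for $S=\{1,2\}$ and $\vec{\mu}=(1,2)$ for $S=\{1,3\}$, the same reduction of the differences $\tilde{\eta}_{\vec{\mu}}\tilde{\rho}_{\vec{\mu}}-\tilde{\eta}_{\vec{\omega}}\tilde{\rho}_{\vec{\omega}}$ to the four operators in Condition~\eqref{eq:alc}, and the same appeals to Proposition~\ref{prop:sgi} and Theorems~\ref{thm:scmp} and \ref{thm:iepi} (your decomposition of $\eta_{1}\rho_{1}-\eta_{4}\rho_{4}$ via index $3$ rather than index $2$ is an immaterial variation). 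Your explicit exhibition of the trivial LOCC measurement $\tilde{M}_{(1,3)}=\mathbbm{1}$ attaining $2\tilde{\eta}_{(1,3)}$ is a welcome extra step that the paper leaves implicit when passing from $p_{\rm SEP}^{\PI}=p_{\rm G}^{\PI}$ to $p_{\rm L}^{\PI}=p_{\rm G}^{\PI}$.
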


\begin{proof}
We first show that quantum nonlocality occurs in discriminating the states of $\mathcal{E}$. 
Due to $\mathbb{H}_{+}\subseteq\mathbb{SEP}^{*}$ and the definition of EW in Eq.~\eqref{eq:wsshp}, Condition~\eqref{eq:alc} implies that $\eta_{1}\rho_{1}-\eta_{2}\rho_{2}$ and $\eta_{1}\rho_{1}-\eta_{3}\rho_{3}$ are in $\mathbb{SEP}^{*}$ and $\eta_{1}\rho_{1}-\eta_{3}\rho_{3}$ is an EW.
Since $\mathbb{SEP}^{*}$ is a convex cone, the sum of $\eta_{1}\rho_{1}-\eta_{2}\rho_{2}$ and $\eta_{2}\rho_{2}-\eta_{4}\rho_{4}$ is in $\mathbb{SEP}^{*}$, that is,
\begin{equation}\label{eq:er14}
\eta_{1}\rho_{1}-\eta_{4}\rho_{4}=(\eta_{1}\rho_{1}-\eta_{2}\rho_{2})+(\eta_{2}\rho_{2}-\eta_{4}\rho_{4})\in\mathbb{SEP}^{*}.
\end{equation}
Thus, Proposition~\ref{prop:sgi} leads us to $p_{\rm SEP}(\mathcal{E})<p_{\rm G}(\mathcal{E})$.

Now, we show that $\PI_{\{1,2\}}$ annihilates nonlocality but $\PI_{\{1,3\}}$ does not annihilate nonlocality.
For $S=\{1,2\}$ and $\vec\mu=(1,3)$, we have
\begin{eqnarray}\label{eq:lsa}
\tilde\eta_{\vec\mu}\tilde\rho_{\vec\mu}-\tilde\eta_{(1,4)}\tilde\rho_{(1,4)}&=&\tfrac{1}{2}(\eta_{3}\rho_{3}-\eta_{4}\rho_{4})\in\mathbb{H}_{+},\nonumber\\
\tilde\eta_{\vec\mu}\tilde\rho_{\vec\mu}-\tilde\eta_{(2,3)}\tilde\rho_{(2,3)}&=&\tfrac{1}{2}(\eta_{1}\rho_{1}-\eta_{2}\rho_{2})\in\mathbb{H}_{+},\nonumber\\
\tilde\eta_{\vec\mu}\tilde\rho_{\vec\mu}-\tilde\eta_{(2,4)}\tilde\rho_{(2,4)}&=&\tfrac{1}{2}(\eta_{1}\rho_{1}-\eta_{2}\rho_{2})+\tfrac{1}{2}(\eta_{3}\rho_{3}-\eta_{4}\rho_{4})\in\mathbb{H}_{+},
\end{eqnarray}
where the equalities are by the definition of $\{\tilde\eta_{\vec\omega},\tilde\rho_{\vec\omega}\}_{\vec\omega\in\Omega_{S}}$ in Eqs.~\eqref{eq:avpr} and \eqref{eq:avrh} and the inclusions are from  Condition~\eqref{eq:alc} together with the fact that $\mathbb{H}_{+}$ is a convex cone. 
Since Inclusion~\eqref{eq:lsa} implies Condition~\eqref{eq:scmpc}, 
it follows from Theorem~\ref{thm:iepi} and the definition of EW in Eq.~\eqref{eq:wsshp} that $p_{\rm SEP}^{\rm PI}(\mathcal{E},S)=p_{\rm G}^{\rm PI}(\mathcal{E},S)$.
Thus, $\PI_{\{1,2\}}$ annihilates nonlocality.

For $S=\{1,3\}$ and $\vec\mu=(1,2)$, we have
\begin{eqnarray}\label{eq:lsb}
\tilde\eta_{\vec\mu}\tilde\rho_{\vec\mu}-\tilde\eta_{(1,4)}\tilde\rho_{(1,4)}&=&\tfrac{1}{2}(\eta_{2}\rho_{2}-\eta_{4}\rho_{4})\in\mathbb{SEP}^{*},\nonumber\\
\tilde\eta_{\vec\mu}\tilde\rho_{\vec\mu}-\tilde\eta_{(3,2)}\tilde\rho_{(3,2)}&=&\tfrac{1}{2}(\eta_{1}\rho_{1}-\eta_{3}\rho_{3})\in\mathbb{SEP}^{*}\backslash\mathbb{H}_{+},\nonumber\\
\tilde\eta_{\vec\mu}\tilde\rho_{\vec\mu}-\tilde\eta_{(3,4)}\tilde\rho_{(3,4)}&=&\tfrac{1}{2}(\eta_{1}\rho_{1}-\eta_{3}\rho_{3})+\tfrac{1}{2}(\eta_{2}\rho_{2}-\eta_{4}\rho_{4})\in\mathbb{SEP}^{*},
\end{eqnarray}
where the equalities are from the definition of $\{\tilde\eta_{\vec\omega},\tilde\rho_{\vec\omega}\}_{\vec\omega\in\Omega_{S}}$ in Eqs.~\eqref{eq:avpr} and \eqref{eq:avrh} and the inclusions follow from  Condition~\eqref{eq:alc} together with the fact that $\mathbb{SEP}^{*}$ is a convex cone. 
Due to Inclusion~\eqref{eq:lsb} and the definition of EW in Eq.~\eqref{eq:wsshp}, Condition~\eqref{eq:scmpc} is satisfied and $\tilde\eta_{\vec\mu}\tilde\rho_{\vec\mu}-\tilde\eta_{\vec\omega}\tilde\rho_{\vec\omega}$ is an EW for $\vec\omega=(3,2)$.
Theorem~\ref{thm:iepi} leads us to $p_{\rm SEP}^{\rm PI}(\mathcal{E},S)<p_{\rm G}^{\rm PI}(\mathcal{E},S)$.
Thus, $\PI_{\{1,3\}}$ does not annihilate nonlocality.
\end{proof}

Now, we provide bipartite quantum state ensembles $\mathcal{E}=\{\eta_{i},\rho_{i}\}_{i\in\Lambda}$ with $\Lambda=\{1,2,3,4\}$ where $\PI_{\{1,2\}}$ annihilates nonlocality but $\PI_{\{1,3\}}$ does not annihilate nonlocality. 

\begin{example}\label{ex:anpi}
For an EW $W$, let us consider the ensemble $\mathcal{E}=\{\eta_{i},\rho_{i}\}_{i\in\Lambda}$ consisting of
\begin{flalign}
\eta_{1}=\tfrac{\Tr(2W_{+}+W_{-})}{4\Tr(W_{+}+W_{-})},~&\rho_{1}=\tfrac{2W_{+}+W_{-}}{\Tr(2W_{+}+W_{-})},\nonumber\\
\eta_{2}=\tfrac{\Tr W_{+}}{4\Tr(W_{+}+W_{-})},~&\rho_{2}=\tfrac{W_{+}}{\Tr W_{+}},\nonumber\\
\eta_{3}=\tfrac{\Tr(W_{+}+2W_{-})}{4\Tr(W_{+}+W_{-})},~&\rho_{3}=\tfrac{W_{+}+2W_{-}}{\Tr(W_{+}+2W_{-})},\nonumber\\
\eta_{4}=\tfrac{\Tr W_{-}}{4\Tr(W_{+}+W_{-})},~&\rho_{4}=\tfrac{W_{-}}{\Tr W_{-}}\label{eq:alex}
\end{flalign}
where $W_{\pm}$ is the positive-semidefinite operator satisfying
\begin{equation}\label{eq:wpmo}
\Tr(W_{+}W_{-})=0,~W=W_{+}-W_{-}.
\end{equation}
\end{example}

A straightforward calculation leads us to
\begin{eqnarray}
\eta_{1}\rho_{1}-\eta_{2}\rho_{2}=\eta_{3}\rho_{3}-\eta_{4}\rho_{4}=\tfrac{W_{+}+W_{-}}{4\Tr(W_{+}+W_{-})},\nonumber\\
\eta_{2}\rho_{2}-\eta_{4}\rho_{4}=\eta_{1}\rho_{1}-\eta_{3}\rho_{3}=\tfrac{W}{4\Tr(W_{+}+W_{-})},\label{eq:wpmm}
\end{eqnarray}
which imply Condition~\eqref{eq:alc} in Theorem~\ref{thm:alc}.
Thus, $\PI_{\{1,2\}}$ annihilates nonlocality but $\PI_{\{1,3\}}$ does not annihilate nonlocality.

The following theorem establishes a sufficient condition for nonlocality arising in discriminating quantum states to be created depending on the choice of subensembles provided by PI.
\begin{theorem}\label{thm:aulc}
For a bipartite quantum state ensemble $\mathcal{E}=\{\eta_{i},\rho_{i}\}_{i\in\Lambda}$ with $\Lambda=\{1,2,3,4,\}$ satisfying
\begin{flalign}
&\eta_{1}\rho_{1}-\eta_{2}\rho_{2}\in\mathbb{H}_{+},\nonumber\\
&\eta_{1}\rho_{1}-\eta_{3}\rho_{3}\in\mathbb{H}_{+},\nonumber\\
&\eta_{2}\rho_{2}-\eta_{4}\rho_{4}\in\mathbb{H}_{+},\nonumber\\
&\eta_{3}\rho_{3}-\eta_{4}\rho_{4}\in\mathbb{SEP}^{*}\backslash\mathbb{H}_{+},
\label{eq:aulc}
\end{flalign}
$\PI_{\{1,2\}}$ creates nonlocality but $\PI_{\{1,3\}}$ does not create nonlocality.
\end{theorem}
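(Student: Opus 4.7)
The strategy mirrors the proof of Theorem~\ref{thm:alc}, but with the roles of ``nonlocal $\mathcal{E}$'' and ``PI-nonlocal $\mathcal{E}$'' interchanged. My first step is to verify that $\mathcal{E}$ itself is free of nonlocality, i.e., $p_{\rm L}(\mathcal{E}) = p_{\rm G}(\mathcal{E})$. Adding the first and third inclusions of \eqref{eq:aulc} gives $\eta_{1}\rho_{1}-\eta_{4}\rho_{4}=(\eta_{1}\rho_{1}-\eta_{2}\rho_{2})+(\eta_{2}\rho_{2}-\eta_{4}\rho_{4})\in\mathbb{H}_{+}$, so $\eta_{1}\rho_{1}-\eta_{i}\rho_{i}\in\mathbb{H}_{+}$ for every $i\in\Lambda$. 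Rewriting any ME success probability as $\eta_{1}-\sum_{i}\Tr[(\eta_{1}\rho_{1}-\eta_{i}\rho_{i})M_{i}]$ then shows $p_{\rm G}(\mathcal{E})=\eta_{1}$, attained by the trivial measurement $M_{1}=\mathbbm{1}$, $M_{j\neq1}=0$. Since that measurement is manifestly LOCC, $p_{\rm L}(\mathcal{E})=p_{\rm G}(\mathcal{E})$.

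Next, to show that $\PI_{\{1,2\}}$ creates nonlocality, I take $\vec{\mu}=(1,3)\in\Omega_{\{1,2\}}$ and expand the three differences $\tilde{\eta}_{\vec{\mu}}\tilde{\rho}_{\vec{\mu}}-\tilde{\eta}_{\vec{\omega}}\tilde{\rho}_{\vec{\omega}}$ via Eqs.~\eqref{eq:avpr} and \eqref{eq:avrh}. They come out equal to $\tfrac{1}{2}(\eta_{3}\rho_{3}-\eta_{4}\rho_{4})$, $\tfrac{1}{2}(\eta_{1}\rho_{1}-\eta_{2}\rho_{2})$, and $\tfrac{1}{2}[(\eta_{1}\rho_{1}-\eta_{2}\rho_{2})+(\eta_{3}\rho_{3}-\eta_{4}\rho_{4})]$. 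All three lie in $\mathbb{SEP}^{*}$ by $\mathbb{H}_{+}\subseteq\mathbb{SEP}^{*}$ together with the convexity of the cone, and the first is an EW thanks to the last inclusion of \eqref{eq:aulc}. Theorem~\ref{thm:iepi} thus yields $p_{\rm SEP}^{\PI}(\mathcal{E},\{1,2\})<p_{\rm G}^{\PI}(\mathcal{E},\{1,2\})$, which by Inequality~\eqref{eq:pppi} upgrades to $p_{\rm L}^{\PI}(\mathcal{E},\{1,2\})<p_{\rm G}^{\PI}(\mathcal{E},\{1,2\})$. Combined with Step~1, this is the creation statement.

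For $\PI_{\{1,3\}}$ I would take $\vec{\mu}=(1,2)\in\Omega_{\{1,3\}}$; the analogous differences become $\tfrac{1}{2}(\eta_{2}\rho_{2}-\eta_{4}\rho_{4})$, $\tfrac{1}{2}(\eta_{1}\rho_{1}-\eta_{3}\rho_{3})$, and $\tfrac{1}{2}[(\eta_{1}\rho_{1}-\eta_{3}\rho_{3})+(\eta_{2}\rho_{2}-\eta_{4}\rho_{4})]$, each lying in $\mathbb{H}_{+}$ by the first three inclusions of \eqref{eq:aulc}. Theorem~\ref{thm:iepi} then forces $p_{\rm SEP}^{\PI}(\mathcal{E},\{1,3\})=p_{\rm G}^{\PI}(\mathcal{E},\{1,3\})$; to lift this to $p_{\rm L}^{\PI}=p_{\rm G}^{\PI}$, I will argue, exactly as in the first step, that the trivial measurement $\tilde{M}_{(1,2)}=\mathbbm{1}$ with the other effects zero is LOCC and, via Eq.~\eqref{eq:aplpie}, returns the success probability $2\tilde{\eta}_{(1,2)}$, which Theorem~\ref{thm:scmp} identifies with $p_{\rm G}^{\PI}(\mathcal{E},\{1,3\})$. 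The only slightly delicate point of the whole argument is this last promotion: Theorem~\ref{thm:iepi} only collapses the \emph{separable}-measurement gap, so concluding absence of nonlocality requires separately exhibiting an LOCC optimizer, which the dominance structure $\tilde{\eta}_{\vec{\mu}}\tilde{\rho}_{\vec{\mu}}-\tilde{\eta}_{\vec{\omega}}\tilde{\rho}_{\vec{\omega}}\in\mathbb{H}_{+}$ conveniently provides.
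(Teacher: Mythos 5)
Your proposal is correct and follows essentially the same route as the paper's proof: the same choices $\vec{\mu}=(1,3)$ for $S=\{1,2\}$ and $\vec{\mu}=(1,2)$ for $S=\{1,3\}$, the same decompositions of $\tilde{\eta}_{\vec{\mu}}\tilde{\rho}_{\vec{\mu}}-\tilde{\eta}_{\vec{\omega}}\tilde{\rho}_{\vec{\omega}}$, and the same appeal to Theorems~\ref{thm:scmp} and \ref{thm:iepi}. The one point where you are more explicit than the paper is the promotion from $p_{\rm SEP}(\mathcal{E})=p_{\rm G}(\mathcal{E})$ (and $p_{\rm SEP}^{\PI}=p_{\rm G}^{\PI}$) to the corresponding $p_{\rm L}$ equalities demanded by the definition of creating nonlocality; exhibiting the trivial LOCC measurement $M_{1}=\mathbbm{1}$ (respectively $\tilde{M}_{(1,2)}=\mathbbm{1}$) closes a step the paper leaves implicit.
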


\begin{proof}
We first show that quantum nonlocality does not occur in discriminating the states of $\mathcal{E}$. 
Since $\mathbb{H}_{+}$ is a convex cone, the sum of $\eta_{1}\rho_{1}-\eta_{2}\rho_{2}$ and $\eta_{2}\rho_{2}-\eta_{4}\rho_{4}$ is in $\mathbb{H}_{+}$, that is,
\begin{equation}\label{eq:er14h}
\eta_{1}\rho_{1}-\eta_{4}\rho_{4}=(\eta_{1}\rho_{1}-\eta_{2}\rho_{2})+(\eta_{2}\rho_{2}-\eta_{4}\rho_{4})\in\mathbb{H}_{+}.
\end{equation}
Thus, Proposition~\ref{prop:sgi} and the definition of EW in Eq.~\eqref{eq:wsshp} together with Eqs.~\eqref{eq:aulc} and \eqref{eq:er14h} lead us to $p_{\rm SEP}(\mathcal{E})=p_{\rm G}(\mathcal{E})$.

Now, we show that $\PI_{\{1,2\}}$ creates nonlocality but $\PI_{\{1,3\}}$ does not create nonlocality.
For $S=\{1,2\}$ and $\vec\mu=(1,3)$, we have
\begin{eqnarray}\label{eq:ul12}
\tilde\eta_{\vec\mu}\tilde\rho_{\vec\mu}-\tilde\eta_{(1,4)}\tilde\rho_{(1,4)}&=&\tfrac{1}{2}(\eta_{3}\rho_{3}-\eta_{4}\rho_{4})\in\mathbb{SEP}^{*}\backslash\mathbb{H}_{+},\nonumber\\
\tilde\eta_{\vec\mu}\tilde\rho_{\vec\mu}-\tilde\eta_{(2,3)}\tilde\rho_{(2,3)}&=&\tfrac{1}{2}(\eta_{1}\rho_{1}-\eta_{2}\rho_{2})\in\mathbb{SEP}^{*},\nonumber\\
\tilde\eta_{\vec\mu}\tilde\rho_{\vec\mu}-\tilde\eta_{(2,4)}\tilde\rho_{(2,4)}&=&\tfrac{1}{2}(\eta_{1}\rho_{1}-\eta_{2}\rho_{2})+\tfrac{1}{2}(\eta_{3}\rho_{3}-\eta_{4}\rho_{4})\in\mathbb{SEP}^{*},
\end{eqnarray}
where the equalities are from the definition of $\{\tilde\eta_{\vec\omega},\tilde\rho_{\vec\omega}\}_{\vec\omega\in\Omega_{S}}$ in Eqs.~\eqref{eq:avpr} and \eqref{eq:avrh} and the inclusions follow from Condition~\eqref{eq:aulc} and $\mathbb{H}_{+}\subseteq\mathbb{SEP}^{*}$ together with the fact that $\mathbb{SEP}^{*}$ is a convex cone. 
Due to Inclusion~\eqref{eq:ul12} and the definition of EW in Eq.~\eqref{eq:wsshp}, Condition~\eqref{eq:scmpc} is satisfied and $\tilde\eta_{\vec\mu}\tilde\rho_{\vec\mu}-\tilde\eta_{\vec\omega}\tilde\rho_{\vec\omega}$ is an EW for $\vec\omega=(1,4)$.
Theorem~\ref{thm:iepi} leads us to $p_{\rm SEP}^{\rm PI}(\mathcal{E},S)<p_{\rm G}^{\rm PI}(\mathcal{E},S)$.
Thus, $\PI_{\{1,2\}}$ creates nonlocality.

For $S=\{1,3\}$ and $\vec\mu=(1,2)$, we have
\begin{eqnarray}\label{eq:ul13}
\tilde\eta_{\vec\mu}\tilde\rho_{\vec\mu}-\tilde\eta_{(1,4)}\tilde\rho_{(1,4)}&=&\tfrac{1}{2}(\eta_{2}\rho_{2}-\eta_{4}\rho_{4})\in\mathbb{H}_{+},\nonumber\\
\tilde\eta_{\vec\mu}\tilde\rho_{\vec\mu}-\tilde\eta_{(3,2)}\tilde\rho_{(3,2)}&=&\tfrac{1}{2}(\eta_{1}\rho_{1}-\eta_{3}\rho_{3})\in\mathbb{H}_{+},\nonumber\\
\tilde\eta_{\vec\mu}\tilde\rho_{\vec\mu}-\tilde\eta_{(3,4)}\tilde\rho_{(3,4)}&=&\tfrac{1}{2}(\eta_{1}\rho_{1}-\eta_{3}\rho_{3})+\tfrac{1}{2}(\eta_{2}\rho_{2}-\eta_{4}\rho_{4})\in\mathbb{H}_{+},
\end{eqnarray}
where the equalities are by the definition of $\{\tilde\eta_{\vec\omega},\tilde\rho_{\vec\omega}\}_{\vec\omega\in\Omega_{S}}$ in Eqs.~\eqref{eq:avpr} and \eqref{eq:avrh} and the inclusions are from Condition~\eqref{eq:aulc} together with the fact that $\mathbb{H}_{+}$ is a convex cone. 
Since Inclusion~\eqref{eq:ul13} implies Condition~\eqref{eq:scmpc}, it follows from Theorem~\ref{thm:iepi} and the definition of EW in Eq.~\eqref{eq:wsshp} that $p_{\rm SEP}^{\rm PI}(\mathcal{E},S)=p_{\rm G}^{\rm PI}(\mathcal{E},S)$.
Thus, $\PI_{\{1,3\}}$ does not create nonlocality.
\end{proof}

Now, we provide bipartite quantum state ensembles $\mathcal{E}=\{\eta_{i},\rho_{i}\}_{i\in\Lambda}$ with $\Lambda=\{1,2,3,4\}$ where $\PI_{\{1,2\}}$ creates nonlocality but $\PI_{\{1,3\}}$ does not create nonlocality. 

\begin{example}\label{ex:crpi}
For an EW $W$, let us consider the ensemble $\mathcal{E}=\{\eta_{i},\rho_{i}\}_{i\in\Lambda}$ consisting of
\begin{flalign}\label{eq:aulex}
\eta_{1}=\tfrac{\Tr(2W_{+}+W_{-})}{\Tr(4W_{+}+3W_{-})},~&\rho_{1}=\tfrac{2W_{+}+W_{-}}{\Tr(2W_{+}+W_{-})},\nonumber\\
\eta_{2}=\tfrac{\Tr (W_{+}+W_{-})}{\Tr(4W_{+}+3W_{-})},~&\rho_{2}=\tfrac{W_{+}+W_{-}}{\Tr (W_{+}+W_{-})},\nonumber\\
\eta_{3}=\tfrac{\Tr W_{+}}{\Tr(4W_{+}+3W_{-})},~&\rho_{3}=\tfrac{W_{+}}{\Tr W_{+}},\nonumber\\
\eta_{4}=\tfrac{\Tr W_{-}}{\Tr(4W_{+}+3W_{-})},~&\rho_{4}=\tfrac{W_{-}}{\Tr W_{-}},
\end{flalign}
where $W_{\pm}$ is the positive-semidefinite operator satisfying
Eq.~\eqref{eq:wpmo}.
\end{example}

From a straightforward calculation, we can verify that
\begin{flalign}\label{eq:ewpmm}
\eta_{1}\rho_{1}-\eta_{2}\rho_{2}&=\tfrac{W_{+}}{\Tr(4W_{+}+3W_{-})},\nonumber\\
\eta_{1}\rho_{1}-\eta_{3}\rho_{3}&=\tfrac{W_{+}+W_{-}}{\Tr(4W_{+}+3W_{-})},\nonumber\\
\eta_{2}\rho_{2}-\eta_{4}\rho_{4}&=\tfrac{W_{-}}{\Tr(4W_{+}+3W_{-})},\nonumber\\
\eta_{3}\rho_{3}-\eta_{4}\rho_{4}&=\tfrac{W}{\Tr(4W_{+}+3W_{-})},
\end{flalign}
which imply Condition~\eqref{eq:aulc} in Theorem~\ref{thm:aulc}.
Thus, $\PI_{\{1,2\}}$ creates nonlocality but $\PI_{\{1,3\}}$ does not create nonlocality.

\section{Discussion}\label{sec:dis}
We have investigated bipartite quantum state discrimination and shown that the annihilation or creation of nonlocality by PI can depend on the choice of subensembles. We first established sufficient conditions for nonlocality in ME with PI (Theorems~\ref{thm:scmp} and \ref{thm:iepi}). We then derived sufficient conditions under which the annihilation or creation of nonlocality depends on how the original ensemble is partitioned into subensembles (Theorems~\ref{thm:alc} and \ref{thm:aulc}). Finally, we constructed explicit bipartite quantum state ensembles satisfying these conditions, thereby illustrating the partition-dependent nature of nonlocality in state discrimination (Examples~\ref{ex:anpi} and \ref{ex:crpi}).

The ensembles $\mathcal{E}=\{\eta_{i},\rho_{i}\}_{i\in\Lambda}$ presented in Examples~\ref{ex:anpi} and \ref{ex:crpi} necessarily contain an entangled state. Indeed, the expectation value of the entanglement witness $W$ in Eq.~\eqref{eq:wpmo} with respect to $\rho_4$ is always negative. By the definition of an entanglement witness, this implies that $\rho_4$ is entangled. Nevertheless, there exist ensembles consisting entirely of separable states that satisfy the conditions of Theorem~\ref{thm:alc}.
\begin{example}\label{ex:nda}
Let us consider the two-qubit state ensemble $\mathcal{E}=\{\eta_{i},\rho_{i}\}_{i\in\Lambda}$ consisting of separable states
\begin{flalign}\label{eq:sec}
\eta_{1}=\tfrac{3}{7},~\rho_{1}=&\tfrac{1}{3}\ket{+}\!\bra{+}\otimes\ket{+}\!\bra{+}+\tfrac{1}{3}\ket{-}\!\bra{-}\otimes\ket{-}\!\bra{-}
+\tfrac{1}{12}\mathbbm{1},\nonumber\\[1mm]
\eta_{2}=\tfrac{3}{14},~\rho_{2}=&\tfrac{1}{6}\ket{0}\!\bra{0}\otimes\ket{0}\!\bra{0}
+\tfrac{1}{6}\ket{1}\!\bra{1}\otimes\ket{1}\!\bra{1}\nonumber\\
&+\tfrac{1}{3}\ket{+}\!\bra{+}\otimes\ket{+}\!\bra{+}
+\tfrac{1}{3}\ket{-}\!\bra{-}\otimes\ket{-}\!\bra{-},
\nonumber\\[1mm]
\eta_{3}=\tfrac{3}{14},~\rho_{3}=&\tfrac{1}{6}\ket{0}\!\bra{0}\otimes\ket{0}\!\bra{0}
+\tfrac{1}{6}\ket{1}\!\bra{1}\otimes\ket{1}\!\bra{1}\nonumber\\
&+\tfrac{1}{6}\ket{\xi_{+}}\!\bra{\xi_{+}}\otimes\ket{\xi_{+}}\!\bra{\xi_{+}}
+\tfrac{1}{6}\ket{\xi_{-}}\!\bra{\xi_{-}}\otimes\ket{\xi_{-}}\!\bra{\xi_{-}}\nonumber\\
&+\tfrac{1}{6}\ket{\zeta_{+}}\!\bra{\zeta_{+}}\otimes\ket{\zeta_{+}}\!\bra{\zeta_{+}}
+\tfrac{1}{6}\ket{\zeta_{-}}\!\bra{\zeta_{-}}\otimes\ket{\zeta_{-}}\!\bra{\zeta_{-}},\nonumber\\[1mm]
\eta_{4}=\tfrac{1}{7},~\rho_{4}=&\tfrac{1}{2}\ket{+}\!\bra{+}\otimes\ket{+}\!\bra{+}
+\tfrac{1}{2}\ket{-}\!\bra{-}\otimes\ket{-}\!\bra{-}
\end{flalign}
where
\begin{flalign}\label{eq:dss}
\ket{\pm}=&\tfrac{1}{\sqrt{2}}(\ket{0}\pm\ket{1}),\nonumber\\
\ket{\xi_{\pm}}=&\tfrac{1}{\sqrt{2}}(\ket{0}\pm e^{\mathrm{i}\pi/4}\ket{1}),\nonumber\\
\ket{\zeta_{\pm}}=&\tfrac{1}{\sqrt{2}}(\ket{0}\pm e^{-\mathrm{i}\pi/4}\ket{1}).
\end{flalign}
\end{example}

From a straightforward calculation, we can verify that
\begin{flalign}\label{eq:sfc}
\eta_{1}\rho_{1}-\eta_{2}\rho_{2}=&\tfrac{1}{14}\Phi_{+}
+\tfrac{1}{14}\Psi_{+}
+\tfrac{1}{28}\ket{01}\!\bra{01}
+\tfrac{1}{28}\ket{10}\!\bra{10}\in\mathbb{H}_{+},
\nonumber\\[1mm]
\eta_{1}\rho_{1}-\eta_{3}\rho_{3}
=&\tfrac{1}{14}\Phi_{+}^{\PT}
+\tfrac{1}{7}\Psi_{+}^{\PT}\in\mathbb{SEP}^{*}\backslash\mathbb{H}_{+},
\nonumber\\[1mm]
\eta_{2}\rho_{2}-\eta_{4}\rho_{4}=&\tfrac{1}{28}\ket{00}\!\bra{00}+\tfrac{1}{28}\ket{11}\!\bra{11}\in\mathbb{SEP}^{*},
\nonumber\\[1mm]
\eta_{3}\rho_{3}-\eta_{4}\rho_{4}=&\tfrac{1}{14}\Phi_{-}\in\mathbb{H}_{+}
\end{flalign}
where $\PT$ is the partial transposition taken in the standard basis $\{\ket{0},\ket{1}\}$ on the second subsystem and
\begin{flalign}\label{eq:debs}
\Phi_{\pm}=\ket{\Phi_{\pm}}\!\bra{\Phi_{\pm}},&
\ket{\Phi_{\pm}}=\tfrac{1}{\sqrt{2}}\ket{00}\pm\tfrac{1}{\sqrt{2}}\ket{11},\nonumber\\
\Psi_{\pm}=\ket{\Psi_{\pm}}\!\bra{\Psi_{\pm}},&
\ket{\Psi_{\pm}}=\tfrac{1}{\sqrt{2}}\ket{01}\pm\tfrac{1}{\sqrt{2}}\ket{10}.
\end{flalign}
The inclusion $\eta_{1}\rho_{1}-\eta_{3}\rho_{3}\in\mathbb{SEP}^{*}\backslash\mathbb{H}_{+}$ in Eq.~\eqref{eq:sfc} follows from $\bra{\Phi_{-}}(\eta_{1}\rho_{1}-\eta_{3}\rho_{3})\ket{\Phi_{-}}=-1/28$ and $E^{\PT}\in\mathbb{SEP}^{*}$ for all $E\in\mathbb{H}_{+}$.
The inclusion $\eta_{2}\rho_{2}-\eta_{4}\rho_{4}\in\mathbb{SEP}^{*}$ in Eq.~\eqref{eq:sfc} is by $\mathbb{H}_{+}\subset\mathbb{SEP}^{*}$.
Since Eq.~\eqref{eq:sfc} implies Condition~\eqref{eq:alc} in Theorem~\ref{thm:alc}, $\PI_{\{1,2\}}$ annihilates nonlocality but $\PI_{\{1,3\}}$ does not annihilate nonlocality.

Moreover, there exist ensembles consisting entirely of separable states that satisfy the conditions of Theorem~\ref{thm:aulc}.
\begin{example}\label{ex:ndc}
Let us consider the two-qubit state ensemble $\mathcal{E}=\{\eta_{i},\rho_{i}\}_{i\in\Lambda}$ consisting of separable states
\begin{flalign}\label{eq:sea}
\eta_{1}=\tfrac{1}{3},~\rho_{1}=&\tfrac{1}{4}\ket{+}\!\bra{+}\otimes\ket{+}\!\bra{+}
+\tfrac{1}{4}\ket{-}\!\bra{-}\otimes\ket{-}\!\bra{-}
+\tfrac{1}{8}\mathbbm{1},
\nonumber\\[1mm]
\eta_{2}=\tfrac{1}{4},~\rho_{2}=&\tfrac{1}{6}\ket{0}\!\bra{0}\otimes\ket{1}\!\bra{1}
+\tfrac{1}{6}\ket{1}\!\bra{1}\otimes\ket{0}\!\bra{0}\nonumber\\
&+\tfrac{1}{3}\ket{+}\!\bra{+}\otimes\ket{+}\!\bra{+}
+\tfrac{1}{3}\ket{-}\!\bra{-}\otimes\ket{-}\!\bra{-},
\nonumber\\[1mm]
\eta_{3}=\tfrac{1}{4},~\rho_{3}=&\tfrac{1}{6}\ket{0}\!\bra{0}\otimes\ket{0}\!\bra{0}
+\tfrac{1}{6}\ket{1}\!\bra{1}\otimes\ket{1}\!\bra{1}\nonumber\\
&+\tfrac{1}{3}\ket{+}\!\bra{+}\otimes\ket{+}\!\bra{+}
+\tfrac{1}{3}\ket{-}\!\bra{-}\otimes\ket{-}\!\bra{-},\nonumber\\[1mm]
\eta_{4}=\tfrac{1}{6},~\rho_{4}=&
\tfrac{1}{4}\ket{\xi_{+}}\!\bra{\xi_{+}}\otimes\ket{\zeta_{+}}\!\bra{\zeta_{+}}
+\tfrac{1}{4}\ket{\xi_{-}}\!\bra{\xi_{-}}\otimes\ket{\zeta_{-}}\!\bra{\zeta_{-}}\nonumber\\
&+\tfrac{1}{4}\ket{\zeta_{+}}\!\bra{\zeta_{+}}\otimes\ket{\xi_{+}}\!\bra{\xi_{+}}
+\tfrac{1}{4}\ket{\zeta_{-}}\!\bra{\zeta_{-}}\otimes\ket{\xi_{-}}\!\bra{\xi_{-}}.
\end{flalign}
\end{example}

It is straightforward to verify that
\begin{flalign}\label{eq:sfa}
\eta_{1}\rho_{1}-\eta_{2}\rho_{2}
=&\tfrac{1}{24}\ket{00}\!\bra{00}
+\tfrac{1}{24}\ket{11}\!\bra{11}\in\mathbb{H}_{+},
\nonumber\\
\eta_{1}\rho_{1}-\eta_{3}\rho_{3}
=&\tfrac{1}{24}\ket{01}\!\bra{01}
+\tfrac{1}{24}\ket{10}\!\bra{10}\in\mathbb{H}_{+},
\nonumber\\
\eta_{2}\rho_{2}-\eta_{4}\rho_{4}=&\tfrac{1}{12}\Psi_{+}\in\mathbb{H}_{+},
\nonumber\\
\eta_{3}\rho_{3}-\eta_{4}\rho_{4}=&\tfrac{1}{12}\Phi_{+}^{\PT}\in\mathbb{SEP}^{*}\backslash\mathbb{H}_{+}.
\end{flalign}
The inclusion $\eta_{3}\rho_{3}-\eta_{4}\rho_{4}\in\mathbb{SEP}^{*}\backslash\mathbb{H}_{+}$ in Eq.~\eqref{eq:sfa} is from $\bra{\Psi_{-}}(\eta_{3}\rho_{3}-\eta_{4}\rho_{4})\ket{\Psi_{-}}=-1/24$ and $E^{\PT}\in\mathbb{SEP}^{*}$ for all $E\in\mathbb{H}_{+}$.
Since Eq.~\eqref{eq:sfa} implies Condition~\eqref{eq:aulc} in Theorem~\ref{thm:aulc}, $\PI_{\{1,2\}}$ creates nonlocality but $\PI_{\{1,3\}}$ does not create nonlocality.

All previously known examples of the annihilation or creation of nonlocality by PI involved ensembles consisting entirely of separable states \cite{ha20221,ha20222}. By contrast, Examples~\ref{ex:anpi} and \ref{ex:crpi} demonstrate that these phenomena can also occur for ensembles containing entangled states. Thus, the presence of entanglement in a state ensemble does not preclude the annihilation or creation of nonlocality by PI. Moreover, Examples~\ref{ex:nda} and \ref{ex:ndc}, together with Examples~\ref{ex:anpi} and \ref{ex:crpi}, indicate that the dependence of these phenomena on the choice of subensembles cannot be determined solely by whether the ensemble contains entangled states.

Our results show that, in ME, the annihilation or creation of quantum nonlocality can depend on the choice of PI. It is therefore natural to investigate whether analogous phenomena arise under other state-discrimination strategies.

\section*{Acknowledgments}
This work was supported by Korea Research Institute for defense Technology planning and advancement (KRIT) grant funded by Defense Acquisition Program Administration(DAPA)(KRIT-CT-23–031) and the Institute for Information \& Communications Technology Planning \& Evaluation(IITP) grant funded by the Korean government(MSIP)(Grant No. RS-2025-02304540). JH and JSK were supported by Creation of the Quantum Information Science R\&D Ecosystem(Grant No. 2022M3H3A106307411) through the National Research Foundation of Korea(NRF) funded by the Korean government(Ministry of Science and ICT).


\end{document}